\newtheorem{definition}{Definition}
\newtheorem{theorem}{Theorem}
\newtheorem{lemma}{Lemma}
\newtheorem{remark}{Remark}
\theoremstyle{nonumberplain}
\newtheorem{proof}{Proof}
\newcommand{\T}{\mathrm{T}}
\newcommand{\GL}{\mathrm{Gr}}
\newcommand{\figurewidth}{0.7\textwidth}
\begin{document}
\title{Restricted Isometry Property of Subspace Projection Matrix Under Random Compression}

\author{Xinyue~Shen~and~Yuantao~Gu
\thanks{The authors are with the Department of Electronic Engineering, Tsinghua University, Beijing 100084, China. The corresponding author of this work is Yuantao Gu (e-mail: gyt@tsinghua.edu.cn).}}%

\date{Submitted Dec 8, 2014, revised Feb 4, 2015; accepted Feb 6, 2015;\\
to appear in \emph{IEEE Signal Processing Letters}}

\maketitle

\begin{abstract}
Structures play a significant role in the field of signal processing.
As a representative of structural data, low rank matrix along with its restricted isometry property (RIP) has been an important research topic in compressive signal processing.
Subspace projection matrix is a kind of low rank matrix with additional structure, which allows for further reduction of its intrinsic dimension. This leaves room for improving its own RIP, which could work as the foundation of compressed subspace projection matrix recovery. In this work, we study the RIP of subspace projection matrix under random orthonormal compression. Considering the fact that subspace projection matrices of $s$ dimensional subspaces in $\mathbb{R}^N$ form an $s(N-s)$ dimensional submanifold in $\mathbb{R}^{N\times N}$, our main concern is transformed to the stable embedding of such submanifold into $\mathbb{R}^{N\times N}$. The result is that by $O(s(N-s)\log N)$ number of random measurements the RIP of subspace projection matrix is guaranteed.

\textbf{Keywords:} restricted isometry property, subspace projection matrix, low rank matrix, manifold stable embedding, compressive signal processing
\end{abstract}

\section{Introduction}
\label{sec:intro}
Signal structure has always been a key point in the field of signal processing.
Structural data, such as sparse signal and low rank matrix, have been important research topics in compressive signal processing \cite{candes2008intro,Fazel2002}.
These structures invoke low intrinsic dimension, so the restricted isometry property (RIP) can be established to guarantee both exact and robust reconstructions from randomly compressed measurements \cite{baraniuk2008simple,LRRIPNNM,LRRIP}.

For a given $s$ dimensional linear subspace $\mathcal{S}$ in an Euclidean space $\mathbb{R}^N$, assuming that $s$ is less than $N$, the subspace projection matrix is a low rank matrix with rather specific structure. In fact, it is not only symmetric, semi-definite, but also has merely eigenvalues 1 and 0. Such additional structure invokes lower intrinsic dimension than a general low rank matrix does, therefore theoretical improvement on the RIP can be expected.

According to the basic ideas in compressive sensing \cite{RIPCS}, the RIP of subspace projection matrix could work as the foundation of compressed subspace projection matrix recovery. Considering the fact that subspace projection matrix has a one to one correspondence with subspace, the recovery from its compression could be viewed as compressed subspace estimation. Subspace estimation has been a concerning problem in signal processing and computer vision. In some scenarios, such as face recognition \cite{LSFC}, motion segmentation \cite{MoSeg}, and visual tracking \cite{1315111}, the objects belong to subspaces with much lower dimension than the ambient space. In fact, subspace estimation from highly incomplete information has recently appeared as an attractive research topic \cite{OITSHII,PETRELS,CSCCaseStudy}.

For subspaces with a given dimension in $\mathbb{R}^N$, their projection matrices form a manifold. There are significant and solid works in manifold-modeled signal recovery from randomly compressed measurements \cite{StaManEmb,ManEmbNew,ManRanLProj,RanProjMan}. These works extend classic compressed sensing by generalizing low-dimension model from sparse signal to signal on low dimensional manifold, and study stable manifold embeddings and nonadaptive dimensionality reduction of data on manifold.
One of the key ideas is to control the regularity of the manifold so that it is well-conditioned. The work in \cite{RanProjManIMPORTANT} utilizes an instructive quantity called the condition number of a manifold also known as the reach of a manifold, which studies submanifold extrinsically and unveils its Riemannian geometry properties \cite{HomologyRanSam}.

In this work, we aim to study the RIP of subspace projection matrix under random orthonormal compression.
A matrix manifold is used to model the set of subspace projection matrices.
By investigating the differential structure and the condition number of such manifold, we are able to conclude that by $O(s(N-s)\log N)$ random measurements the RIP of subspace projection matrices is guaranteed.

\section{Main result}
\label{sec:main}
In this work, we study the RIP of subspace projection matrices under random orthonormal compression.
\begin{definition}
The set of projection matrices corresponding to $s$ dimensional subspaces in $\mathbb{R}^{N}$ is defined as
\begin{align}\label{projm}
\mathcal{P}_{N,s}:&=\{P_X= X(X^\T X)^{-1}X^\T:
 X\in\mathbb{R}^{N\times s}, \mathrm{dim}(\mathrm{span}(X))=s\} \nonumber \\
&=\{ P_X=XX^\T:\;X\in\GL_{N,s}\},
\end{align}
in which $\mathrm{span}(X)$ denotes the column space of $X$, and $\GL_{N,s}$ is the Grassmann manifold of $s$ dimensional subspaces in $\mathbb{R}^N$.
\end{definition}

\begin{remark}
Equation \eqref{projm} is obtained by ortho-normalizing the columns of $X$ while keeping $\mathrm{span}(X)$ fixed. Because different choices of $X$ do not change $P_X$ as long as $\mathrm{span}(X)$ is fixed, we have $X\in\GL_{N,s}$ \cite{OptAl}.
\end{remark}
\begin{remark}
From Definition \ref{projm}, we know that, for a linear subspace $\mathcal{S}$, its projection matrix is the matrix that a vector ${\bf x}\in\mathbb{R}^N$ has to multiply when projected onto $\mathcal{S}$. This is the reason that it is called a projection matrix.
\end{remark}

$\mathcal{P}_{N,s}$ is an $s(N-s)$ dimensional submanifold in $\mathbb{R}^{N\times N}$.
The following theorem describes the RIP of $\mathcal{P}_{N,s}$ under random orthoprojector.

\begin{theorem}\label{RIPP}
For fixed $0<\varepsilon<1$ and $\beta>0$, assume that $N\geq 3$. Let $\mathcal{A}:\mathbb{R}^{N\times N}\rightarrow\mathbb{R}^m$ be a random orthoprojector with
\begin{align}
m\geq
&\left( \frac{2+\beta}{\varepsilon^2-\varepsilon^3/3}\right) O\left(s(N-s)\log \frac{N}{\varepsilon}\right).
\end{align}
If $m<N^2$, then with probability exceeding $1-\mathrm{e}^{-c_1\beta s(N-s)}$, in which $c_1$ is a universal constant, the following property holds for every pair of $P_X,P_Y\in\mathcal{P}_{N,s}$, $P_X\neq P_Y$,
\begin{equation}
(1-\varepsilon)\frac{\sqrt{m}}{N} \leq \frac{\|\mathcal{A}(P_X-P_Y)\|_2}{\|P_X-P_Y\|_F} \leq (1+\varepsilon)\frac{\sqrt{m}}{N}.
\end{equation}
\end{theorem}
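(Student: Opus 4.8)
\medskip
\noindent\emph{Proof idea.} The plan is to read the asserted two-sided bound as the statement that the random orthoprojector $\mathcal{A}$ gives a \emph{stable embedding} of the compact submanifold $\mathcal{P}_{N,s}\subset\mathbb{R}^{N\times N}\cong\mathbb{R}^{N^2}$, and then to apply the random-projection theorem for smooth compact manifolds \cite{StaManEmb,RanProjManIMPORTANT}. In the form needed here, that theorem says: if $\mathcal{M}$ is a compact $K$-dimensional $C^2$ submanifold of $\mathbb{R}^{N^2}$ with reach $\tau$, $K$-dimensional volume $V$, and geodesic covering regularity $R$, then for $m\ge\left(\frac{2+\beta}{\varepsilon^2-\varepsilon^3/3}\right)C\,K\log\!\left(N V R\,\tau^{-1}\varepsilon^{-1}\right)$ with $C$ a universal constant, with probability at least $1-\mathrm{e}^{-c_1\beta K}$ every secant $p-q$ of $\mathcal{M}$ satisfies $(1-\varepsilon)\frac{\sqrt m}{N}\le\|\mathcal{A}(p-q)\|_2/\|p-q\|_F\le(1+\varepsilon)\frac{\sqrt m}{N}$. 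So everything reduces to (i) checking the structural hypotheses and identifying $K$, and (ii) bounding $\tau$, $V$, $R$ for $\mathcal{P}_{N,s}$ well enough that $K\log(NVR\tau^{-1}\varepsilon^{-1})=O(s(N-s)\log(N/\varepsilon))$.

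The structural facts are immediate. For $X\in\GL_{N,s}$ with orthonormal columns, $\|P_X\|_F^2=\mathrm{tr}(XX^\T XX^\T)=\mathrm{tr}(X^\T X)=s$, so $\mathcal{P}_{N,s}$ lies on a Frobenius sphere; it is the image of the compact Grassmannian under the smooth embedding $\mathrm{span}(X)\mapsto XX^\T$, hence compact, and of dimension $K=s(N-s)$ as already recorded. Two further elementary facts organize the geometry: $P\mapsto I-P$ is an isometry of $\mathbb{R}^{N\times N}$ carrying $\mathcal{P}_{N,s}$ onto $\mathcal{P}_{N,N-s}$, so one may assume $s\le N/2$; and if $\theta_1,\dots,\theta_s$ are the principal angles between $\mathrm{span}(X)$ and $\mathrm{span}(Y)$ then $\|P_X-P_Y\|_F^2=2s-2\|X^\T Y\|_F^2=2\sum_{i=1}^s\sin^2\theta_i$, which converts Frobenius geometry into subspace geometry.

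The core of the argument, and the step I expect to be the main obstacle, is a lower bound on the reach $\tau$ of $\mathcal{P}_{N,s}$ that is independent of $N$ and $s$. Fix a base point $P=XX^\T$ and complete $X$ to an orthogonal matrix $[\,X\ \ X_\perp\,]$; differentiating the defining relations $P^2=P,\ P^\T=P$ identifies the tangent space $T_P\mathcal{P}_{N,s}=\{XBX_\perp^\T+X_\perp B^\T X^\T:B\in\mathbb{R}^{s\times(N-s)}\}$, whose orthogonal complement within the symmetric matrices is the space of matrices block diagonal in the $X,X_\perp$ splitting. Federer's characterization gives
\begin{equation*}
\tau^{-1}=\sup_{P\ne Q\in\mathcal{P}_{N,s}}\frac{2\,\mathrm{dist}\!\left(Q-P,\ T_P\mathcal{P}_{N,s}\right)}{\|Q-P\|_F^2},
\end{equation*}
whose value is the larger of a local (curvature) contribution, computed from the second fundamental form by one more differentiation of $P^2=P$ along a geodesic (for a geodesic rotating one column of $X$ toward a unit vector in $\mathrm{span}(X_\perp)$, the moving projection $P(t)$ has normal part of Frobenius norm $\sqrt2\sin^2 t$ while $\|P(t)-P\|_F=\sqrt2\sin t$, giving ratio $\sqrt2$), and a global (bottleneck) contribution, which the block decomposition together with the principal-angle identity lets one evaluate directly (for $\mathrm{span}(X)$ and $\mathrm{span}(Y)$ sharing an $(s-1)$-dimensional subspace and making a right angle in the remaining direction, $Q-P$ is normal to $T_P\mathcal{P}_{N,s}$ and the ratio is again $\sqrt2$). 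Checking that these are the extreme configurations should yield $\tau^{-1}=\sqrt2$, hence $\tau=1/\sqrt2$, for all admissible $N,s$; at a minimum $\tau\ge c$ for an absolute $c>0$.

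It remains to bound $V$ and $R$, which enter only logarithmically. The volume of $\GL_{N,s}$ in the projection embedding satisfies $V\le N^{c_2 s(N-s)}$ (equivalently, the $\delta$-covering number of $\GL_{N,s}$ in its natural metric is at most $(c_3/\delta)^{s(N-s)}$), and the geodesic covering regularity of this homogeneous space is at most polynomial in $N$; hence $K\log(NVR\tau^{-1}\varepsilon^{-1})=s(N-s)\cdot O(\log(N/\varepsilon))$. Substituting $K=s(N-s)$, the constant reach, these bounds, and the choice $\rho=\mathrm{e}^{-c_1\beta s(N-s)}$ into the manifold embedding theorem above — or, since the stated prefactor is explicit, tracking it through the proof: a net over the secants of $\mathcal{P}_{N,s}$ of log-cardinality $O(s(N-s)\log(N/\varepsilon))$, the Johnson--Lindenstrauss-type concentration inequality for a random orthoprojector, whose failure probability decays like $\mathrm{e}^{-c\,m(\varepsilon^2-\varepsilon^3/3)}$ \cite{baraniuk2008simple}, a union bound, and a reach-controlled extension from the net to all pairs — gives exactly $m\ge\left(\frac{2+\beta}{\varepsilon^2-\varepsilon^3/3}\right)O(s(N-s)\log(N/\varepsilon))$, the $2$ being the ratio of the net-entropy rate to the concentration exponent and $\beta$ the slack that makes the total failure probability $\mathrm{e}^{-c_1\beta s(N-s)}$. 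The hypothesis $m<N^2$ is exactly what is needed for a random orthoprojector of $\mathbb{R}^{N^2}$ onto $\mathbb{R}^m$ to exist, and $N\ge3$ is a mild assumption ensuring $s(N-s)\ge2$ and that the covering estimates hold. This completes the proof sketch.
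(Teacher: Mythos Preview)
Your proposal is correct and follows essentially the same architecture as the paper: identify $\mathcal{P}_{N,s}$ as a compact $s(N-s)$-dimensional submanifold, compute its reach $\tau=1/\sqrt{2}$, bound the covering number of the chord set via the known Grassmannian covering estimate $(C_0/\delta)^{s(N-s)}$, apply the Johnson--Lindenstrauss concentration for a random orthoprojector, and union-bound over the net with a reach-controlled extension to all secants. The one noteworthy difference is in how the reach is obtained: you invoke Federer's secant characterization $\tau^{-1}=\sup_{P\ne Q}2\,\mathrm{dist}(Q-P,T_P)/\|Q-P\|_F^2$ and evaluate it via principal angles and the second fundamental form, whereas the paper argues directly that the open normal bundle of radius $r<1/\sqrt{2}$ is embedded by a case analysis (on $\mathrm{rank}(\Phi)$) of where two normal spaces could intersect---both routes land on $\tau=1/\sqrt{2}$, and yours is arguably the cleaner of the two.
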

\begin{proof}
The proof is postponed to section \ref{sec:proof}.
\end{proof}

From Theorem \ref{RIPP}, the number of measurements $m\geq O(s(N-s)\log N)$ is enough to guarantee the RIP under random orthonormal compression. For low rank matrices without further specific structure, the number of measurements should be no less than $O(sN\log N)$ \cite{LRRIPNNM}. One may notice that when $s\ll N$, the improvement from the latter to the former is not much. Although it is true in that case, the result in Theorem \ref{RIPP} does improve the scaling law of the number of measurements on $s$, and verifies the intuition that, compared with low rank matrices, subspace projection matrices have additional structure which is able to further reduce the number of compressed measurements needed for reconstruction.

Heuristically, the scaling law $m\geq O((N-s)s\log N)$ is reasonable, in that the degree of freedom of a subspace projection matrix is $s(N-s)$. It should be highlighted that although Theorem \ref{RIPP} is for an orthonormal random compression, the conclusion could be naturally extended to random compressions satisfying the concentration property \cite{Coin}.

In the following text, the establishment of Theorem \ref{RIPP} will be demonstrated. The first order differential structure and the condition number of $\mathcal{P}_{N,s}$ are studied in section \ref{sec:projm}, and then Theorem \ref{RIPP} is readily proved in section \ref{sec:proof}.

\section{Manifold of Subspace Projection Matrix}
\label{sec:projm}
Matrix manifold has been a powerful tool to structural matrix data recovery \cite{RiePurMatRec}. Typical matrix manifolds, such as the Grassmann manifold, the orthogonal group, and the Stiefel manifold,  have been comprehensively studied, and one may read \cite{JohnLee,OptAl,GrassGeom} for reference.

In this section, we study the set of subspace projection matrices of $s$ dimensional subspaces in $\mathbb{R}^{N}$ denoted as $\mathcal{P}_{N,s}$ and defined in \eqref{projm}.
Because a subspace has a one to one correspondence with a projection matrix, and such correspondence is continuous, $\mathcal{P}_{N,s}$ is an $s(N-s)$ dimensional manifold which is homeomorphic to $\GL_{N,s}$.

Preceding the calculation of the differential structure and the condition number, we may first illustrate $\mathcal{P}_{N,s}$ for specific $N=2$ and $s=1$. From the definition, we know that
\begin{align*}
\mathcal{P}_{2,1}
&\cong \left\{[x_1^2,\sqrt{2}x_1x_2,x_2^2]^\T:\;[x_1,x_2]^\T\in\GL_{2,1}\right\}.
\end{align*}
Because $\GL_{2,1}\cong\mathbb{S}^1$, $\mathcal{P}_{2,1}$ can be expressed as a circle with radius $1/\sqrt{2}$ as shown in Fig. \ref{fig1}.
\begin{figure}
  \centering
  \includegraphics[width=\figurewidth]{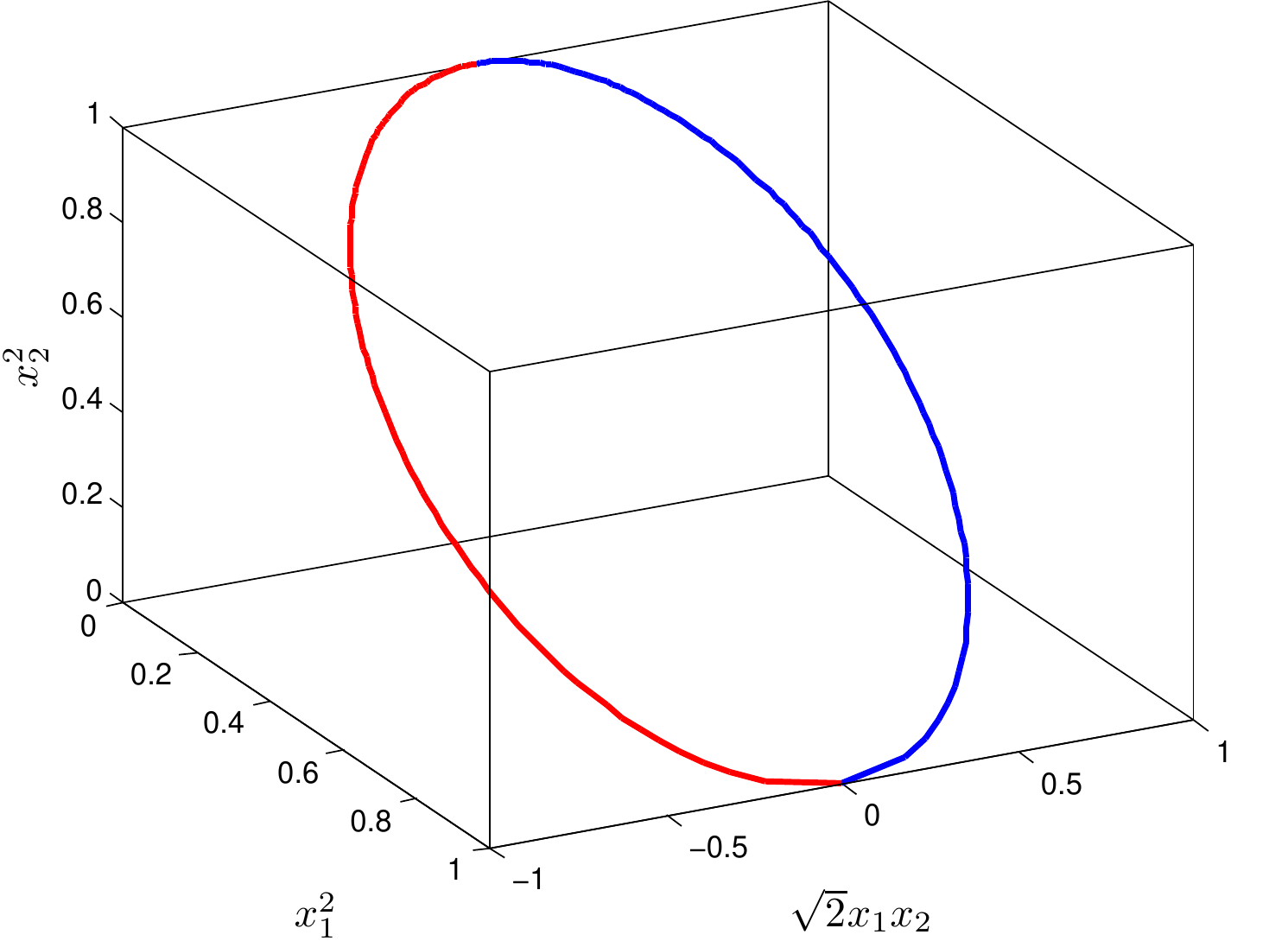}
  \caption{$\mathcal{P}_{2,1}$ can be illustrated as a circle with radius $1/\sqrt{2}$.}
  \label{fig1}
\end{figure}

\subsection{Tangent space and normal space of $\mathcal{P}_{N,s}$}
In this part, we study the first order differential structure of $\mathcal{P}_{N,s}$. The reason we need it is that the condition number, which is an important quantity used for the stable embedding of a manifold, will be defined by the normal bundle.

Denote $\mathrm{skew}(N)$ as the set of all $N\times N$ skew symmetric matrices, and $\mathrm{sym}(N)$ as the set of all $N\times N$ symmetric matrices. In the following lemma, the tangent space and the normal space at every point of $\mathcal{P}_{N,s}$ are unveiled. Remind that by stating $P_X=XX^\T$, it is indicated that $X\in\GL_{N,s}$.

\begin{lemma}
The tangent space of $\mathcal{P}_{N,s}$ at a point $P_X=XX^\T$ is
\begin{align}\label{tangent}
\mathcal{T}_{P_X}\mathcal{P}_{N,s}=\{X_\bot KX^\T + XK^\T X_\bot^\T:\;K\in\mathbb{R}^{(N-s)\times s}\},
\end{align}
and the normal space is
\begin{align}\label{normal}
\mathcal{N}_{P_X}\mathcal{P}_{N,s}=\mathrm{skew}(N)\bigcup
\{S\in \mathrm{sym}(N):\;S=(P_X-P_{X_\bot})S_0,S_0\in \mathrm{sym}(N)\}.
\end{align}
$X_\bot$ is the matrix such that $[X,X_\bot]^\T [X,X_\bot]= I_N$.
\end{lemma}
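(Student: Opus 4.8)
The plan is to compute the tangent space by differentiating a smooth curve through $P_X$ in $\mathcal{P}_{N,s}$, and then to identify the normal space by finding the orthogonal complement of the tangent space inside $\mathbb{R}^{N\times N}$ with respect to the Frobenius inner product. First I would parametrize curves near $P_X$: since every point of $\mathcal{P}_{N,s}$ has the form $YY^\T$ for $Y$ with orthonormal columns, a smooth curve is $P(t)=Y(t)Y(t)^\T$ with $Y(0)=X$ and $Y(t)^\T Y(t)=I_s$. Differentiating at $t=0$ gives $\dot P(0)=\dot Y Y^\T + Y\dot Y^\T$, where the constraint forces $X^\T\dot Y + \dot Y^\T X = 0$, i.e. $X^\T\dot Y$ is skew symmetric. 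Using the orthogonal splitting $\dot Y = XA + X_\bot K$ with $A\in\mathbb{R}^{s\times s}$, $K\in\mathbb{R}^{(N-s)\times s}$, the constraint says $A$ is skew, and a short computation shows the $A$-part contributes $XAX^\T + XA^\T X^\T = X(A+A^\T)X^\T = 0$; hence only the $X_\bot K$ part survives, yielding $\dot P(0)= X_\bot K X^\T + X K^\T X_\bot^\T$. Conversely every such matrix is realized by an explicit curve (e.g.\ $Y(t)=X\cos(t\sqrt{K^\T K})/\cdots$, or more simply by a QR-type construction), so \eqref{tangent} holds; counting real parameters gives $s(N-s)$, matching the known dimension.

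Next I would derive the normal space as the Frobenius-orthogonal complement of $\mathcal{T}_{P_X}\mathcal{P}_{N,s}$ in $\mathbb{R}^{N\times N}$. Decompose an arbitrary $M\in\mathbb{R}^{N\times N}$ into its symmetric and skew parts; since every tangent vector is symmetric, the entire skew part $\mathrm{skew}(N)$ is automatically orthogonal to the tangent space, which accounts for the first piece of \eqref{normal}. For the symmetric part, write $M$ in the block basis adapted to $[X,X_\bot]$, i.e.\ consider the four blocks $X^\T M X$, $X^\T M X_\bot$, $X_\bot^\T M X$, $X_\bot^\T M X_\bot$; the tangent space is exactly the set of symmetric matrices whose diagonal blocks vanish and whose off-diagonal blocks are arbitrary (paired by symmetry). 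Hence a symmetric $S$ is normal iff its off-diagonal blocks $X^\T S X_\bot$ and $X_\bot^\T S X$ vanish, i.e.\ $S$ leaves $\mathrm{span}(X)$ and $\mathrm{span}(X_\bot)$ invariant. I would then verify that this block-diagonal condition is equivalent to the stated description $S=(P_X-P_{X_\bot})S_0$ with $S_0$ symmetric: indeed $P_X-P_{X_\bot}$ is the symmetric involution with $\pm1$ eigenspaces $\mathrm{span}(X)$, $\mathrm{span}(X_\bot)$, and multiplying a symmetric $S_0$ on the left by this involution yields precisely the symmetric matrices that are block diagonal in these two subspaces (one checks $S$ symmetric $\iff$ $S_0$ can be taken symmetric and commuting appropriately). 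A dimension check — $\binom{N}{2}$ for the skew part plus $\binom{s+1}{2}+\binom{N-s+1}{2}$ for the block-diagonal symmetric part, totalling $N^2 - s(N-s)$ — confirms the complement is accounted for exactly.

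The main obstacle I anticipate is the bookkeeping in the second step: showing that the algebraically natural condition ``$S$ symmetric and block diagonal with respect to $\mathrm{span}(X)\oplus\mathrm{span}(X_\bot)$'' coincides with the paper's compact but slightly unusual formulation ``$S=(P_X-P_{X_\bot})S_0$, $S_0\in\mathrm{sym}(N)$'' — in particular making precise which $S_0$ are allowed and checking the claim goes through cleanly in both directions (one direction needs the observation that $(P_X-P_{X_\bot})$ is an orthogonal symmetric involution, so conjugation preserves $\mathrm{sym}(N)$, and one must be careful that $S$ itself is symmetric, which constrains $S_0$ to those symmetric matrices already block diagonal in the two subspaces, or equivalently one simply takes $S_0=S$). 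The tangent-space computation, by contrast, is routine once the constraint $X^\T Y = \text{(skew)}$ on curves is handled; the only care needed is exhibiting an honest smooth curve realizing each candidate tangent vector so that \eqref{tangent} is an equality rather than an inclusion.
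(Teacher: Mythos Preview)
Your proposal is correct and follows essentially the same strategy as the paper: differentiate curves $Y(t)Y(t)^\T$ to obtain the tangent directions, then identify the normal space via Frobenius orthogonality and a dimension count. The paper is slightly more economical than your plan in two places. First, for the equality in \eqref{tangent} it does not exhibit explicit curves realizing each candidate tangent vector; instead it only proves the inclusion $U\subset\mathcal{T}_{P_X}\mathcal{P}_{N,s}$ and then invokes $\dim U=s(N-s)=\dim\mathcal{T}_{P_X}\mathcal{P}_{N,s}$. Second, for the normal space the paper does not derive the block-diagonal description from scratch as you propose; it simply takes the set $V$ as given in \eqref{normal}, checks directly (one trace computation) that $U\perp V$, and then counts $\dim V=N^2-s(N-s)$, which forces $V=\mathcal{N}_{P_X}\mathcal{P}_{N,s}$ simultaneously with $U=\mathcal{T}_{P_X}\mathcal{P}_{N,s}$. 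Your block-decomposition argument is more illuminating about \emph{why} the normal space has the stated form, but the paper's verify-and-count approach sidesteps the bookkeeping you flagged as the main obstacle. One small slip: in your final parenthetical, taking $S_0=S$ does not generally give back $S$, since $(P_X-P_{X_\bot})S$ flips the sign on the $X_\bot$-block; the correct inverse choice is $S_0=(P_X-P_{X_\bot})S$.
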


\begin{proof}
Denote the sets in \eqref{tangent} and \eqref{normal} as $U$ and $V$, respectively, both of which are subspaces.

Suppose that $P_{X(t)}=X(t)X(t)^\T$ is a curve on $\mathcal{P}_{N,s}$ with $P_{X(0)}=P_{X}$. The tangent vector along such curve at point $P_{X}$ is
\begin{align*}
\frac{\mathrm{d}}{\mathrm{d}t}\bigg|_{t=0} P_{X(t)}
= \frac{\mathrm{d}X(t)}{\mathrm{d}t}\bigg|_{t=0} X(0)^\T + X(0) \frac{\mathrm{d}X(t)^\T}{\mathrm{d}t}\bigg|_{t=0} = \xi X^\T + X \xi^T,
\end{align*}
in which $\xi$ is a vector in $\mathcal{T}_{X}\GL_{N,s}$. Because these tangent vectors all belong to the tangent space of $\mathcal{P}_{N,s}$ at $P_X$, we have
\begin{align*}
\mathcal{T}_{P_X}\mathcal{P}_{N,s}\supset U=\{\xi X^\T+X\xi^\T:\;\xi\in\mathcal{T}_{X}\GL_{N,s}\}.
\end{align*}
To check that $U\bot V$, notice that any symmetric matrix is orthogonal to any skew-symmetric matrix, and that
\begin{align*}
&\mathrm{tr}((X_\bot KX^\T + XK^\T X_\bot^\T)(XX^\T-X_\bot X_\bot^\T)S_0)\\
= \,&\mathrm{tr}((X_\bot K X^\T - XK^\T X_\bot^\T)S_0)\\
= \,&\mathrm{tr}(X_\bot K X^\T S_0^\T) - \mathrm{tr}(S_0XK^\T X_\bot^\T)=0.
\end{align*}
The dimension of $V$ is
\begin{align*}
N(N-1)/2+s(s-1)+(N-s)(N-s-1)/2+N
=N^2-Ns+s^2,
\end{align*}
and the dimension of $U$ is $Ns-s^2$, which is equal to the dimension of $\mathcal{T}_{P_X}\mathcal{P}_{N,s}$. Therefore, we finish the proof that $\mathcal{T}_{P_X}\mathcal{P}_{N,s}=U$ and $\mathcal{N}_{P_X}\mathcal{P}_{N,s}=V$.
\end{proof}

\subsection{Condition Number of $\mathcal{P}_{N,s}$}
For the purpose of delineating the regularity of a manifold, a notion called the condition number of a manifold, also known as the reach of a manifold \cite{RanProjMan}, is introduced.

\begin{definition}\cite{RanProjManIMPORTANT}
Let $\mathcal{M}$ be a compact Riemannian sub-manifold of $\mathbb{R}^N$. The condition number is defined as $1/\tau$, where $\tau$ is the largest number having property that the open normal bundle about $\mathcal{M}$ of radius $r$ is embedded in $\mathbb{R}^N$ for all $r<\tau$.
\end{definition}

The condition number $1/\tau$ controls both local properties, such as the curvature of any unit-speed geodesic curve on the manifold, and global properties, such as how close the manifold may curve back upon itself at long geodesic distance \cite{HomologyRanSam}. However, from its definition, the condition number of a manifold is not easy to obtain in general. In this section, we focus on the condition number of the manifold $\mathcal{P}_{N,s}$. In the next section, we shall see that the following lemma is a key step to the main result.

\begin{lemma}\label{conprojm}
The condition number $1/\tau$ of $\mathcal{P}_{N,s}$ is $\sqrt{2}$.
\end{lemma}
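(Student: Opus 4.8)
The plan is to compute the reach $\tau$ of $\mathcal P_{N,s}$ directly from Federer's characterization of the condition number: for a compact $C^2$ submanifold $\mathcal M$ of a Euclidean space one has
\[
\frac{1}{\tau}=\sup_{\substack{p,q\in\mathcal M\\ p\neq q}}\frac{2\,\mathrm{dist}\bigl(q-p,\ \mathcal T_p\mathcal M\bigr)}{\|p-q\|^2},
\]
where $\mathrm{dist}(q-p,\mathcal T_p\mathcal M)$ is the norm of the component of $q-p$ lying in the normal space $\mathcal N_p\mathcal M$. This formula, which is implicit in the normal-bundle definition of the condition number used in \cite{RanProjManIMPORTANT,HomologyRanSam}, automatically accounts for both the local obstruction (as $q\to p$ the ratio tends to the largest normal curvature) and the global one, so it suffices to evaluate the right-hand side over $\mathcal M=\mathcal P_{N,s}$.

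The first step is to reduce a general pair $P_X\neq P_Y$ to a canonical form using the $O(N)$-homogeneity of $\mathcal P_{N,s}$ together with the principal angles between the two subspaces. Conjugating $P_X,P_Y$ by a common orthogonal matrix preserves every Frobenius norm and carries the tangent/normal splitting of Lemma 1 to itself, so it is enough to take $P_X=\sum_{i=1}^s e_ie_i^\T$ and $P_Y=\sum_{i=1}^s y_iy_i^\T$ with $y_i=\cos\theta_i\,e_i+\sin\theta_i\,e_{s+i}$, where $\theta_1,\dots,\theta_s\in[0,\pi/2]$ are the principal angles (at most $N-s$ of which are nonzero). Expanding,
\[
P_Y-P_X=\sum_{i}\sin\theta_i\cos\theta_i\bigl(e_ie_{s+i}^\T+e_{s+i}e_i^\T\bigr)+\sum_{i}\sin^2\theta_i\bigl(e_{s+i}e_{s+i}^\T-e_ie_i^\T\bigr).
\]
By Lemma 1 the first sum lies in $\mathcal T_{P_X}\mathcal P_{N,s}$ (it is $X_\bot KX^\T+XK^\T X_\bot^\T$ for a diagonal $K$) and the second lies in $\mathcal N_{P_X}\mathcal P_{N,s}$ (it equals $(P_X-P_{X_\bot})S_0$ for a diagonal, hence symmetric, $S_0$); being symmetric and of these two forms, they are orthogonal by the relation checked in the proof of Lemma 1, so they are exactly the tangent and normal components of $P_Y-P_X$. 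Since the rank-one terms with distinct indices have disjoint supports, $\|P_X-P_Y\|_F^2=2\sum_i\sin^2\theta_i$ and $\mathrm{dist}(P_Y-P_X,\mathcal T_{P_X}\mathcal P_{N,s})^2=2\sum_i\sin^4\theta_i$.

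Substituting into the reach formula and writing $a_i=\sin^2\theta_i\in[0,1]$, not all zero,
\[
\frac{2\,\mathrm{dist}(P_Y-P_X,\mathcal T_{P_X}\mathcal P_{N,s})}{\|P_X-P_Y\|_F^2}=\sqrt2\;\frac{\bigl(\sum_i a_i^2\bigr)^{1/2}}{\sum_i a_i}\le\sqrt2,
\]
using $\sum_i a_i^2\le(\sum_i a_i)^2$ for nonnegative $a_i$, with equality exactly when a single $\theta_i$ is nonzero --- a configuration that does occur, e.g.\ when $\mathrm{span}(Y)$ is obtained from $\mathrm{span}(X)$ by rotating one principal direction into its orthogonal complement. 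Hence the supremum equals $\sqrt2$ and is attained, giving $1/\tau=\sqrt2$.

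The point that needs the most care is the use of Federer's formula itself --- that the reach of $\mathcal P_{N,s}$ is controlled entirely by the pairwise ratio above, with no separate ``bottleneck'' pair forcing $\tau$ below $1/\sqrt2$. If one prefers to stay within the toolkit of \cite{RanProjManIMPORTANT,HomologyRanSam}, I would instead establish the two bounds independently. For $\tau\le 1/\sqrt2$: the image under $X\mapsto XX^\T$ of a one-parameter single-angle rotation is a geodesic of $\mathcal P_{N,s}$ (the map from the Grassmann manifold being a homothety of factor $\sqrt2$), and a short computation of its second derivative shows its ambient acceleration is normal to $\mathcal P_{N,s}$ with normal curvature $\sqrt2$, so a focal point --- hence a point of the medial axis --- sits at distance $1/\sqrt2$. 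For $\tau\ge 1/\sqrt2$: one must show the open normal bundle of radius $r<1/\sqrt2$ is injective, which is the genuinely global part, and the estimate $(\sum_i a_i^2)^{1/2}\le\sum_i a_i$ obtained above is precisely what bounds how quickly $\mathcal P_{N,s}$ can curve back toward itself. Finally, the statement is understood for $1\le s\le N-1$, the cases $s=0,N$ giving a single point; and for $N=2,\,s=1$ it reduces to the circle of radius $1/\sqrt2$ of Fig.\ \ref{fig1}.
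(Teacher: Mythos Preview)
Your proof is correct and takes a genuinely different route from the paper's. The paper argues directly from the normal-bundle definition of $\tau$: using the description of $\mathcal N_{P_X}\mathcal P_{N,s}$ from Lemma~1, it asks when two affine normal spaces $P_X+\mathcal N_{P_X}$ and $P_Y+\mathcal N_{P_Y}$ can meet at a common point $\Phi$ within radius $r$ of both, and shows by a case analysis on $\mathrm{rank}(\Phi)$ and its eigenvector structure that this forces $r\ge 1/\sqrt2$, exhibiting an explicit $\Phi$ (with eigenvalues $1,\dots,1,\tfrac12,\tfrac12,0,\dots,0$) to witness equality. Your argument instead invokes Federer's reach formula and reduces everything to a single computation in principal-angle coordinates, which is both cleaner and more informative: the exact tangent/normal splitting of $P_Y-P_X$, the identities $\|P_X-P_Y\|_F^2=2\sum_i\sin^2\theta_i$ and $\mathrm{dist}(P_Y-P_X,\mathcal T_{P_X})^2=2\sum_i\sin^4\theta_i$, and the sharp inequality $(\sum a_i^2)^{1/2}\le\sum a_i$ all drop out in one line, with equality precisely at the one-angle rotations. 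What your approach buys is transparency and a direct link to the Grassmannian geometry; the cost is reliance on Federer's characterization, which you rightly flag as the delicate step and for which you sketch an independent two-sided bound. The paper's approach stays closer to the raw definition but is more ad hoc, and its rank-by-rank case analysis is harder to verify.
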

\begin{proof}
The proof is based on the fact that $\tau$ is the radius of the largest possible non-self-intersecting tube around the manifold.

The distance between $P_X$ and the set of skew matrices is $\|P_X-0\|_F=\sqrt{s}$, in that
\begin{align*}
\|P_X\|_F&=\sqrt{\mathrm{tr}(XX^\T XX^\T)}
=\sqrt{\mathrm{tr}(X^\T X)}=\sqrt{s}.
\end{align*}
Thus, if the tube around the manifold intersects itself at a point of skew symmetric matrix, then the radius of the tube is no less than $\sqrt{s}$.

For any $X$ and $Y$ in $\GL_{N,s}$, denote $\bar{X}=XX^\T-X_\bot X_\bot^\T$ and $\bar{Y}=YY^\T-Y_\bot Y_\bot^\T$. If the tube around the manifold intersects itself at a point of symmetric matrix $\Phi$, then $\exists X, Y\in \GL_{N,s}$ such that
\begin{align}\label{intersection}
\bar{X}S_1=\bar{Y}S_2=\Phi \in \mathrm{sym}(N).
\end{align}
Equation \eqref{intersection} is equivalent to the condition that $\Phi$ and $\bar{X}$ share the same eigenvector matrix $\widetilde{X}$, and $\Phi$ and $\bar{Y}$ share the same eigenvector matrix $\widetilde{Y}$, which is not necessarily to be the same as $\widetilde{X}$. One of the eigenspaces of $\bar{X}$ is $\mathrm{span}(X)$ corresponding to eigenvalue $1$, and the other one is $\mathrm{span}(X_\bot)$ corresponding to eigenvalue $-1$. Thus, $\widetilde{X}$ is composed of basis of $\mathrm{span}(X)$ and basis of $\mathrm{span}(X_\bot)$. Without loss of generality, we can assume that $\widetilde{X}=[X,X_\bot]$ and $\widetilde{Y}=[Y,Y_\bot]$. Now we are able to prove that if $\|\Phi-P_X\|_F<1/\sqrt{2}$ and $\|\Phi-P_Y\|_F<1/\sqrt{2}$, then \eqref{intersection} can not hold.

First we consider the case where $\mathrm{rank}(\Phi)=s$. Suppose that \eqref{intersection} holds, and $\|\Phi-P_X\|_F<1/\sqrt{2}$ and $\|\Phi-P_Y\|_F<1/\sqrt{2}$.
According to the discussion in the previous paragraph, we have that
$\|\Phi-XX^\T\|_F = \| \Lambda_1 - \Lambda\|_F$ and
$\|\Phi-YY^\T\|_F = \| \Lambda_2 - \Lambda\|_F,$
in which $\Lambda$, $\Lambda_1$, and $\Lambda_2$ are diagonal matrices. $\Lambda = \mathrm{diag}[1,\cdots,1,0,\cdots,0]$, in which the number of $1$ is $s$. Both $\Lambda_1$ and $\Lambda_2$ have $s$ non-zero elements and $N-s$ zeros on the diagonal. Observe that if the non-zeros of $\Lambda_1$ is the first $s$ elements on its diagonal, then the non-zeros of $\Lambda_2$ can not be the first $s$ elements on its diagonal. Otherwise, $X$ and $Y$ span the same subspace. Thus, a contradiction comes from the fact that at least one of $\| \Lambda_1 - \Lambda\|_F>1$ and $\| \Lambda_2 - \Lambda\|_F>1$ holds.

The second case is  $\mathrm{rank}(\Phi)<s$. If \eqref{intersection} holds, then it is obvious that $\| \Lambda_1 - \Lambda\|_F>1$ and $\| \Lambda_2 - \Lambda\|_F>1$, so $\|\Phi-P_X\|_F<1/\sqrt{2}$ and $\|\Phi-P_Y\|_F<1/\sqrt{2}$ can not hold.

The third case is $\mathrm{rank}(\Phi)>s$. Suppose that the eigenvalues of $\Phi$ are $\lambda_1\geq\lambda_2\geq\cdots\geq\lambda_N$. If \eqref{intersection} holds and $\|\Phi-P_X\|_F<1/\sqrt{2}$, then we must have $\lambda_1\geq\cdots\geq\lambda_s>1/2>\lambda_{s+1}\geq\cdots\geq\lambda_N,$
and eigenvalues $\lambda_1,\cdots,\lambda_s$ correspond to the eigenspace $\mathrm{span}(X)$. Since span$(X)\neq$span$(Y)$, $$\| \Lambda_2 - \Lambda\|_F>\sqrt{(1/2)^2+(1-1/2)^2}=\sqrt{1/2},$$
so $\|\Phi-P_Y\|_F<1/\sqrt{2}$ can not hold.

These three cases show that $\|\Phi-P_X\|_F<1/\sqrt{2}$, $\|\Phi-P_Y\|_F<1/\sqrt{2}$, and \eqref{intersection} can not hold simultaneously. From the discussion in the third case, it is obvious that $\|\Phi-P_X\|_F=1/\sqrt{2}$, $\|\Phi-P_Y\|_F=1/\sqrt{2}$, and \eqref{intersection} can hold simultaneously by choosing
$
\lambda_1=\cdots=\lambda_{s-1}=1, \lambda_s=\lambda_{s+1}=1/2,
\lambda_{s+2}=\cdots=\lambda_{N}=0.
$

Consequently, for any $P_X,P_Y\in\mathcal{P}_{N,s}$, $\Phi\in\mathbb{R}^{N\times N}$, if $\|\Phi-P_X\|_F<1/\sqrt{2}$ and $\|\Phi-P_Y\|_F< 1/\sqrt{2}$, their normal spaces can not intersect at point $\Phi$. If $\|\Phi-P_X\|_F=1/\sqrt{2}$ and $\|\Phi-P_Y\|_F= 1/\sqrt{2}$, then there exists a $\Phi$ at which their normal spaces intersect. Thus, for $\mathcal{P}_{N,s}$, $\tau=1/\sqrt{2}$, and the condition number $1/\tau$ is $\sqrt{2}$.
\end{proof}

The condition number $1/\tau$ of $\mathcal{P}_{N,s}$ provides the regularity of this manifold, so its RIP is able to be derived.

\section{Proof of Main Result}
\label{sec:proof}
\begin{proof}
Basically, Theorem \ref{RIPP} is proved by applying the condition number of the manifold of the projection matrix $\mathcal{P}_{N,s}$, calculating the covering number of the set of chords of $\mathcal{P}_{N,s}$, and utilizing the Johnson-Lindenstrauss lemma.

The set of chords of a manifold $\mathcal{M}$ is denoted as
\begin{align}
\mathcal{C}(\mathcal{M}):&
=\left\{\frac{X-Y}{\|X-Y\|_F}:\;X,Y\in\mathcal{M},\;X\neq Y\right\}.
\end{align}
From lemma C.1 in \cite{StaManEmb}, we know that for any $0<T\leq\ 3\tau /4$, the set $\mathcal{C}(B_T)$ is a $(4\sqrt{T/\tau},\|\cdot\|_2)$-cover of $\mathcal{C}(\mathcal{P}_{N,s})$, where $B_T$ is defined as
\begin{align}
B_T=\bigcup_{P\in\mathcal{N}(\mathcal{P}_{N,s},T)} \{P+\mathcal{T}_P\mathcal{P}_{N,s}(T)\},
\end{align}
in which $\mathcal{N}(\mathcal{P}_{N,s},T)$ is the $(T,d_g)$-cover of $\mathcal{P}_{N,s}$, and $\mathcal{T}_P\mathcal{P}_{N,s}(T):=\{\xi\in\mathcal{T}_P\mathcal{P}_{N,s}:\;\|\xi\|_F\leq T\}$. It is easily shown in Part B of the proof of Theorem III.1 in \cite{StaManEmb} that the $\epsilon$-cover of $\mathcal{C}(B_T)$ satisfies
\begin{align}
|\mathcal{N}(\mathcal{C}(B_T), \epsilon)| \leq \nonumber |\mathcal{N}(\mathcal{P}_{N,s},T)|\left(1+\frac{2}{\epsilon}\right)^s
+|\mathcal{N}(\mathcal{P}_{N,s},T)|^2\left(1+\frac{2}{\epsilon}\right)^{2s+1}.
\end{align}

Theorem 8 in \cite{MetricEntropyHomoSp} gives that the $(T,d_p)$ covering number of Grassmann manifold $\mathrm{Gr}_{N,s}$ is $\left(C_0/T\right)^{s(N-s)}$ with $C_0$ being a universal constant.
Remind that the projection distance on $\mathrm{Gr}_{N,s}$ is defined as
\begin{align*}
d^2_p(X,Y):=\frac{1}{2} \|XX^\T - YY^\T\|_F^2 .
\end{align*}
Thus, the $(\sqrt{2}T,d)$ covering number of $\mathcal{P}_{N,s}$ is also $\left(C_0/T\right)^{s(N-s)}$.

For any $Z\in\mathcal{C}(\mathcal{P}_{N,s})$, there exist $\widetilde{Z}\in\mathcal{C}(B_T)$ and $\hat{Z}\in\mathcal{N}(\mathcal{C}(B_T), \epsilon)$ such that $\|(Z-\widetilde{Z})\|_2\leq 4\sqrt{T/\tau}$ and $\|(\hat{Z}-\widetilde{Z})\|_2\leq \varepsilon$. We then have
\begin{align}\label{lessthan}
\|\mathcal{A}(Z)\|_2 &\leq \|\mathcal{A}(Z-\widetilde{Z})\|_2 + \|\mathcal{A}(\widetilde{Z}-\hat{Z})\|_2 +\|\mathcal{A}(\hat{Z})\|_2\nonumber\\
&\leq 4\sqrt{T/\tau}+\epsilon+\|\mathcal{A}(\hat{Z})\|_2
\end{align}
and
\begin{align}\label{greaterthan}
\|\mathcal{A}(Z)\|_2 &\geq \|\mathcal{A}(\hat{Z})\|_2 - \|\mathcal{A}(Z-\hat{Z})\|_2\nonumber\\
&\geq \|\mathcal{A}(\hat{Z})\|_2 - (\|\mathcal{A}(Z-\widetilde{Z})\|_2 + \|\mathcal{A}(\widetilde{Z}-\hat{Z})\|_2)\nonumber\\
&\geq \|\mathcal{A}(\hat{Z})\|_2 - 4\sqrt{T/\tau}-\epsilon.
\end{align}
Equations \eqref{lessthan} and \eqref{greaterthan} together with Lemma 1.1 in \cite{RanProjManIMPORTANT} (known as the Johnson-Lindenstrauss lemma) give that
\begin{align}\label{sup}
\sup_{Z\in\mathcal{C}(\mathcal{P}_{N,s})} \|\mathcal{A}(Z)\|_2 \nonumber
&\leq \sup_{\hat{Z}\in\mathcal{N}(\mathcal{C}(B_T), \epsilon)} \epsilon+4\sqrt{T/\tau} + \|\mathcal{A}(\hat{Z})\|_2\\
&\leq \epsilon+4\sqrt{\frac{T}{\tau}}+(1+\delta)\sqrt{\frac{m}{N^2}},
\end{align}
and
\begin{align}\label{inf}
\inf_{Z\in\mathcal{C}(\mathcal{P}_{N,s})} \|\mathcal{A}(Z)\|_2
&\geq \inf_{ \hat{Z}\in\mathcal{N}(\mathcal{C}(B_T), \epsilon)} \|\mathcal{A}(\hat{Z})\|_2-(\epsilon+4\sqrt{T/\tau}) \nonumber\\
&\geq  (1-\delta)\sqrt{\frac{m}{N^2}}- \epsilon-4\sqrt{\frac{T}{\tau}}.
\end{align}
Equations \eqref{sup} and \eqref{inf} hold simultaneously with probability exceeding $1-|\mathcal{N}(\mathcal{C}(B_T), \epsilon)|^{-\beta}$ given that
\[m\geq \left( \frac{4+2\beta}{\delta^2/2-\delta^3/3}\right)\log |\mathcal{N}(\mathcal{C}(B_T), \epsilon)|.\]
From Lemma \ref{conprojm}, we know that $\tau=1/\sqrt{2}$. Let $\delta=\varepsilon/2$, $\epsilon=\varepsilon\sqrt{m}/(4N)$, and $T=\tau m\varepsilon^2/(256N^2)$, then
\begin{align*}
|\mathcal{N}(\mathcal{C}(B_T), \epsilon)|
&\leq\left(\frac{\sqrt{2}C_0}{T}\right)^{s(N-s)}\left(1+\frac{2}{\epsilon}\right)^s 
+ \left(\frac{\sqrt{2}C_0}{T}\right)^{2s(N-s)}\left(1+\frac{2}{\epsilon}\right)^{2s+1}\\
&\leq 2 \left(\frac{512N^2C_0}{m\varepsilon^2}\right)^{2s(N-s)}
\left(1+8\sqrt{\frac{N^2}{m\varepsilon^2}}\right)^{2s+1}\\
&\leq 2 \left(\frac{512N^2C_0}{\varepsilon^2}\right)^{2s(N-s)}
\left(1+8 \frac{N}{\varepsilon}\right)^{2s+1}.
\end{align*}
Thus, we conclude that if
\begin{align*}
m\geq&\left( \frac{4+2\beta}{\varepsilon^2/8-\varepsilon^3/24}\right)
\bigg(\log 2+ 2s(N-s)\log\left(\frac{512N^2C_0}{\varepsilon^2}\right)
+(2s+1)\log\left(1+\frac{8N}{\varepsilon}\right)\bigg)\\
\sim &\left( \frac{2+\beta}{\varepsilon^2-\varepsilon^3/3}\right) O\left(s(N-s)\log \left(\frac{N}{\varepsilon}\right)\right)
\end{align*}
then $\forall Z\in\mathcal{C}(\mathcal{P}_{N,s})$,
$(1-\varepsilon)\sqrt{m}/N \leq \|\mathcal{A}(Z)\|_2 \leq (1+\varepsilon)\sqrt{m}/N$
holds with probability exceeding $1-|\mathcal{N}(\mathcal{C}(B_T), \epsilon)|^{-\beta}$.

In order to control the probability above, we need a lower bound on $|\mathcal{N}(\mathcal{C}(B_T), \epsilon)|$. Notice that
\begin{align*}
|\mathcal{N}(\mathcal{C}(B_T), \epsilon)|&\geq |\mathcal{N}(\mathcal{P}_{N,s},T)|\left(1+\frac{2}{\epsilon}\right)^s\\
&\geq|\mathcal{N}(\mathcal{P}_{N,s},T)|
=\left|\mathcal{N}\left(\GL_{N,s},\frac{T}{\sqrt{2}}\right)\right|\geq\left(\frac{\sqrt{2}c_0}{T}\right)^{s(N-s)},
\end{align*}
in which the last inequality holds when $T\leq \sqrt{2}\pi/4$ according to Theorem 8 in\cite{MetricEntropyHomoSp}, and $c_0$ is a universal constant. Because $T=\tau m\varepsilon^2/ (256N^2)< \tau/256= 1/(256\sqrt{2})$, we have
$$
|\mathcal{N}(\mathcal{C}(B_T), \epsilon)|\geq (512c_0)^{s(N-s)}=\mathrm{exp}(c_1 s(N-s)).
$$
Thus, the probability exceeding
$$
1- |\mathcal{N}(\mathcal{C}(B_T), \epsilon)|^{-\beta} \geq 1-\mathrm{exp}(-c_1\beta s(N-s)).
$$
\end{proof}

\section{Conclusion}

Subspace projection matrices are low rank matrices with additional structure that allows for further reduction of its intrinsic dimension. In this work, the restricted isometry property of subspace projection matrix under random orthonormal compression is studied.

The set of $s$ dimensional subspace projection matrices $\mathcal{P}_{N,s}$ is modeled as an $s(N-s)$ dimensional submanifold in $\mathbb{R}^{N\times N}$, so the main concern is transformed to the problem of the stable embedding of $\mathcal{P}_{N,s}$ into $\mathbb{R}^{N\times N}$. One of the key points is the calculation of the conditional number $1/\tau$ of $\mathcal{P}_{N,s}$. Once $\tau$ is obtained, the RIP is able to be established by applying covering sets of the set of chords of $\mathcal{P}_{N,s}$ and utilizing the JL lemma. In order to calculate the condition number $1/\tau$, the tangent space and the normal space at every point of $\mathcal{P}_{N,s}$ are investigated.
The result is that by $O(s(N-s)\log N)$ measurements the RIP of subspace projection matrix is guaranteed.

This work is not exhausted. The condition number $1/\tau$ of $\mathcal{P}_{N,s}$ provides the regularity of this manifold, so its RIP under other random compressions, such as i.i.d. Gaussian compression, could also be established using such condition number. Furthermore, theoretical analysis on algorithms for compressed subspace projection matrix recovery could be built upon this work.

\bibliographystyle{unsrt}
\bibliography{refs1}
\end{document}